\def\BibTeX{{\rm B\kern-.05em{\sc i\kern-.025em b}\kern-.08em
    T\kern-.1667em\lower.7ex\hbox{E}\kern-.125emX}}
\newcommand{\bA}{\textbf{A}}
\newcommand{\ba}{\textbf{a}}
\newcommand{\bc}{\textbf{c}}
\newcommand{\bP}{\textbf{P}}
\newtheorem{theorem}{Theorem}
\newtheorem{lemma}{Lemma}
\theoremstyle{remark}
\newtheorem{remark}{Remark}
\begin{document}

\title{Decentralizing Multi-Operator Cognitive Radio Resource Allocation: An Asymptotic Analysis}

\author{%
	\IEEEauthorblockN{Ehsan Tohidi\IEEEauthorrefmark{1},
		David Gesbert\IEEEauthorrefmark{1},
		Antonio Bazco-Nogueras\IEEEauthorrefmark{1}\IEEEauthorrefmark{2},
		Paul de Kerret\IEEEauthorrefmark{1}}
	\IEEEauthorblockA{\IEEEauthorrefmark{1}%
		Communication Systems Department, EURECOM,
		06410 Biot, France,
		\{tohidi,gesbert,bazco,dekerret\}@eurecom.fr}
	\IEEEauthorblockA{\IEEEauthorrefmark{2}%
		Mitsubishi Electric R\&D Centre Europe (MERCE),
		35700 Rennes, France.}
		\thanks{Paul de Kerret is currently working at Mantu R\&D lab, Biot, France.}
}

\maketitle

\begin{abstract}
We address the problem of resource allocation (RA) for spectrum underlay in a cognitive radio (CR) communication
system with multiple secondary operators sharing resource with an incumbent primary operator. The multiple secondary operator RA problem is well known to be 
especially challenging because of the inter-operator coupling constraints arising in the optimization problem, which render impractical inter-operator information exchange necessary. In this paper, we consider a satellite setting for multi-operator CR. In the CR maturation regime, i.e., the period in which the secondary subscriber density is growing yet remains much below that of incumbent users, we show that in fact the inter-operator mutual constraints can be neglected, thus making distributed (across secondary operators) optimization possible. Furthermore, we establish analytically that the mutual constraints asymptotically vanish with the primary user density.
\end{abstract}

\begin{IEEEkeywords}
Cognitive Satellite Communication, Resource Allocation, Asymptotic Limit, Distributed Optimization
\end{IEEEkeywords}

\section{Introduction}
Explosive growth in both quantity and quality of services, on one hand, and scarcity of spectrum, on the other hand, has directed the research directions toward enhancing the spectrum utilization. In this context, CR communication systems have been introduced as a promising solution \cite{1391031,788210,7748543}.
In CR systems, there exists an incumbent network, i.e., the primary system, which has the license of using the spectrum. There is also one or more secondary systems that aim to utilize the same spectrum and maximize the data rate of the secondary users (SUs), however with a guarantee of not excessively interfering with the primary system. This guarantee is maintained through some interference temperature thresholds which are held for each primary user (PU). Classically, an optimization problem with the SUs sum-rate as the objective function and a set of constraints on interference imposed on PUs can be formulated. Although the optimization problem (subband assignment, power allocation, etc.) is mixed-integer and usually NP-hard \cite{5963799}, it is yet well-defined and several algorithms have been proposed to find a proper solution \cite{8546798,8500158,7336495,7510839}. 

The problem of RA for CR systems has been studied vastly as briefly reviewed next. For instance, considering a single operator scenario, a message-passing algorithm is proposed in \cite{8546798} to perform joint subband assignment and power allocation, while in \cite{8500158}, a deep learning-based RA algorithm for heterogeneous internet of things (IoT) is presented. Also, in \cite{7336495,7510839}, joint power and carrier RA algorithms for cognitive satellite (CogSat) communications with incumbent terrestrial networks are proposed.

The RA problem becomes more challenging when several secondary operators are co-existing in an underlay manner with a common incumbent, referred below as a \textit{multi-operator CR system}. In this case, SUs from different secondary operators contribute to the interference level at each PU and must coordinate to keep the total interference low. Consequently, in order to satisfy the interference constraints, secondary operators must typically share their user resource information with each other towards finding a centralized solution \cite{5955145,7516556,7039232}.
For instance, in \cite{5955145}, a centralized dynamic spectrum allocation scheme is proposed which can measure the interference level and interact dynamically to minimize interference and enhance spectrum utilization while maintaining a satisfactory level of quality of service (QoS). Similarly, \cite{7516556,7039232} proposed approaches of spectrum sharing that involve an exchange of information among operators in order to increase spectrum utilization. 
In most scenarios, information sharing among otherwise competing operators raises complexity and privacy issues. Hence, a form of inter-operator coordination that can circumvent this problem while meeting maximum allowed interference constraints is highly desirable.
In this paper, we investigate this problem in the particular context of CogSat communications, where the secondary operators serve their subscribers via satellites, while the incumbent is a terrestrial cellular or satellite operator \cite{7060478}. A regime of particular interest is the so-called CR maturation regime in which the number of SUs is growing yet remains much below the number of PUs.

In this paper, we consider the CR maturation regime and analyze the coupling effects between secondary operators towards the PU received interference. We establish analytically that such mutual constraints asymptotically
vanish with the primary user density. As a result,
we obtain that the inter-operator mutual constraints can be neglected, thus making distributed (i.e., across secondary
operators) optimization possible. In turn, this result indicates that the secondary operators can in fact coordinate without the need for private information exchange.

\section{System Model}
Consider the problem of RA in a multi-operator CogSat communication system. We assume $L$ PUs forming the primary network and licensed to communicate over the spectrum. Moreover, $N$ secondary operators are considered which are serving $K$ SUs and forming the secondary network. Each secondary operator has a satellite with $B$ beams sharing the same bandwidth. This bandwidth is split up into $M$ subbands. In each beam of each satellite, we assume a frequency division multiple access (FDMA) serving up to $M$ SUs, i.e., one subband per SU. We denote the number of SUs of each secondary operator with $Q=BM$ (therefore $K=NQ$).

An example of CogSat communication system is depicted in Fig. \ref{fig:Scenario}, where $N=2$ operators each with $B=2$ beams are communicating with $M=3$ SUs per beam. Therefore, $Q=6$ and $K=12$ SUs per operator and in total, respectively. Also, there are $L=5$ PUs communicating through the incumbent network. In this paper, we consider the uplink channel for the secondary network.
\begin{figure}
	\centering
	\includegraphics[width=.48\textwidth]{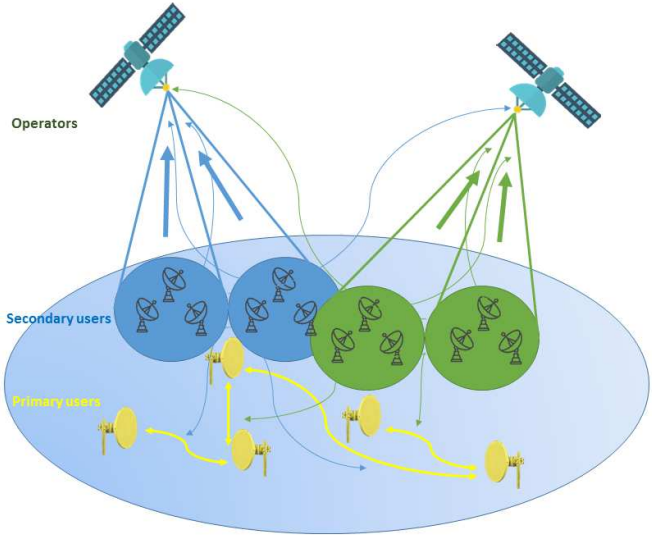}			
	\caption{A multi-operator CogSat communication system.}
	\label{fig:Scenario}
\end{figure}

We define $G_{n,q}^{(b)}(m)$ as the channel gain in subband $m$ from SU $q$ of satellite $n$ to beam $b$. Moreover, $F_{n,q}^{(l)}(m)$ is the channel gain on subband $m$ from SU $q$ of operator $n$ to PU $l$. We model the channel gains among SUs and PUs in the following form
\begin{equation}
F_{n,q}^{(l)}(m) = \frac{C}{r_{n,q,l}^{\alpha}} + s_{n,q}^{(l)}(m),
\label{channelgainshadow}
\end{equation}
where $r_{n,q,l}$ is the distance between SU $q$ of operator $n$ and PU $l$, $\alpha$ is the path loss exponent, e.g., $\alpha=2$ for free space, $C$ is a constant, and $s_{n,q}^{(l)}(m)$ describes the shadowing effect which are assumed to be i.i.d. and characterized by $s_{n,q}^{(l)}(m)\sim \mathcal{N}(0, \sigma_s^2)$. The set of SUs of beam $b$ of operator $n$, operator $n$, and the total set of SUs are denoted by $\mathcal{U}_{n,b}$, $\mathcal{U}_n$, and $\mathcal{U}$, respectively.
In Table \ref{tableparam}, we recall the list of parameters.
\begin{table}
	\begin{center}
	\caption{List of parameters}
	\label{tableparam}
	\begin{tabular}{| c | c |} 
		\hline
		Parameter & Description \\
		\hline\hline
		$N, B$ & the number of operators and beams, respectively \\ 
		\hline
		$M$ & the number of frequency subbands and the number\\& of SUs in each beam \\
		\hline
		$L$ & the number of PUs \\
		\hline
		$Q,K$ & the number of SUs of each operator and the total \\&number of SUs, respectively \\
		\hline
		$\mathcal{U}_{n,b}, \mathcal{U}_n,\mathcal{U}$ & the set of SUs of beam $b$ of operator $n$, operator $n$,\\& and in total, respectively \\
		\hline
		$F_{n,q}^{(l)}(m)$ & the channel gain on subband $m$ from SU $q$ \\&of operator $n$ to PU $l$ \\
		\hline
		$G_{n,q}^{(b)}(m)$ & the channel gain on subband $m$ from SU $q$ \\& of operator $n$ to beam $b$ \\
		\hline
	\end{tabular}
\end{center}
\end{table}

\section{Problem Formulation}
Considering the uplink channel, the goal is to maximize the sum-rate of SUs while not excessively interfering with the PUs. The optimization variables are subband assignment and power allocation which are defined as follows:
\begin{itemize}
	\item $\bA = [\bA(1),...,\bA(m)]\in\{0,1\}^{N\times Q\times M}$ the subband assignment matrix where $A_{n,q}(m)$ is 1 if subband $m$ is assigned to the SU $q$ of operator $n$ and $0$ otherwise.
	\item $\bP = [\bP(1),...,\bP(m)]\in \mathbb{R}_+^{N\times Q\times M}$ the power allocation matrix where $P_{n,q}(m)$ corresponds to the transmission power of SU $q$ of operator $n$ on subband $m$.
\end{itemize}
Therefore, the total interference imposed by the SUs on PU $l$ in subband $m$ is calculated in the following form
\begin{equation}
I^{(l)}(m)=\sum_{n=1}^N\sum_{q=1}^Q{{A_{n,q}(m)F_{n,q}^{(l)}(m)P_{n,q}(m)}}.
\label{interferPU}
\end{equation}

To guarantee the communication quality of the primary network, we consider average interference-temperature constraints associated with the $L$ PUs and for each of the $M$ subbands. The constraints are represented as follows
\begin{equation}
\mathbb{E}\{I^{(l)}(m)\} \leqslant I^{(l)}_{th}(m), \forall l,m,
\label{expectedinterference}
\end{equation}
where $I_{th}^{(l)}(m)$ is the interference-temperature threshold at PU $l$ on subband $m$.

As the interference imposed on PUs in \eqref{interferPU} is a linear function of channel gains, the shadowing parameter in \eqref{channelgainshadow} diminishes due to the averaging operator in \eqref{expectedinterference}. Therefore, to simplify the notations in derivations of Section IV, we only consider the deterministic part, i.e., as a function of distance.

The signal power for SU $q$ of secondary operator $n$ in beam $b$, i.e., $q\in\mathcal{U}_{n,b}$, on subband $m$ at the satellite is $A_{n,q}(m)G_{n,q}^{(b)}(m)P_{n,q}(m)$, while all transmissions by the other SUs at the satellite $n$ in beam $b$ play the role of interference for this user. Thus, for SU $q\in\mathcal{U}_{n,b}$, the received interference on subband $m$ for beam $b$ of satellite $n$ is given by
\begin{equation}
J^{(b)}_{n,q}(m) = \sum_{\substack{i=1 \\ i\neq q}}^Q{A_{n,i}(m) G_{n,i}^{(b)} P_{n,i}(m)},
\end{equation}
where inter-operator interference is not considered since satellites are assumed to be far from each other and the SUs have highly directed radiation patterns toward their associated satellite \cite{maral2011satellite,7336495}. Furthermore, due to the FDMA scheme, there is no intra-beam interference among SUs.

We consider the sum-rate of SUs as the objective function.
The sum-rate for the secondary operator $n$ is calculated as follows
\begin{equation}
\begin{aligned}
R_n =& \sum_{b=1}^B\sum_{q\in\mathcal{U}_{n,b}}{\sum_{m=1}^M{A_{n,q}(m)}}\\& \ \  \times \log_2(1+\frac{G_{n,q}^{(b)}(m)P_{n,q}(m)}{1+J^{(b)}_{n,q}(m)}).
\end{aligned}
\end{equation}
Subsequently, the total sum-rate is given by
\begin{equation}
R = \sum_{n=1}^NR_n.
\end{equation}

Considering a peak power constraint on each subband for each SU, i.e., $0 \leqslant P_{n,q}(m) \leqslant P_{\max}, \forall n,q$, the RA optimization problem can be formulated in the following form
\begin{equation}
\begin{aligned}
\max_{\bA,\bP}\quad & R \\
\text{s.t.}\quad& \text{C1: } \sum_{n=1}^N\sum_{q=1}^QA_{n,q}(m)F_{n,q}^{(l)}(m)P_{n,q}(m) 
\\& \quad \quad \quad \quad \quad \quad \quad \quad \quad \quad \leqslant I_{th}^{(l)}(m), \quad \forall l,m, \\
& \text{C2: } A_{n,q}(m) \in \{0,1\}, \quad \forall n,q,m, \\
& \text{C3: } 0 \leqslant P_{n,q}(m) \leqslant P_{\max}, \quad \forall n,q,m, \\
& \text{C4: } \sum_{q\in\mathcal{U}_{n,b}}{A_{n,q}(m)}=1, \quad \forall n,b,m, \\
& \text{C5: } \sum_{m=1}^M{A_{n,q}(m)}=1, \quad \forall n,q, \\
\end{aligned}
\label{equ:mainopt}
\end{equation}
where C1 is to ensure that the interference level does not exceed the given threshold, C2 states that the subband assignment is binary, C3 is to limit the SU power between 0 and the maximum allowed level $P_{\max}$, and subband assignment restrictions, i.e., one SU in each beam be assigned to each subband and one subband be assigned to each SU, are applied in C4 and C5. Although the optimization problem \eqref{equ:mainopt} is known to be NP-hard \cite{5963799}, the main challenge is how to solve the problem in a distributed framework. In this paper, our approach is different and we want to analyze the problem asymptotically and demonstrate how the problem simplifies in the asymptotic regime.

\section{Asymptotic Limit of CogSat Networks}
In the asymptotic scenario, we assume that the number of PUs $L$ grows faster than the number of SUs $K$ while both are asymptotically large numbers. The asymptotic scenario is mathematically modelled as $K = L^\beta$, with the scaling exponent $0<\beta<1$, and $L\to\infty$. We also define the asymptotic index $\lambda=\frac{L}{K} = L^{1-\beta}$. If the PUs and SUs are located randomly with a uniform distribution over the region of interest, we expect to encounter a scenario such as Fig. \ref{fig:AsymptoticFS}, where PUs are densely located and SUs are far from each other relatively. We mathematically verify this observation and use it to simplify the optimization problem in \eqref{equ:mainopt}, where we aim to show that under the asymptotic limit conditions, i.e., $L\to\infty$ and $\lambda\to\infty$, the optimization problem \eqref{equ:mainopt} uncouples such that the global optimal solution can be reached through a distributed optimization. More precisely, we demonstrate how the set of interference constraints in C1 converts to simple constraints on the peak power.

\begin{figure}
	\centering
	\includegraphics[width=.48\textwidth]{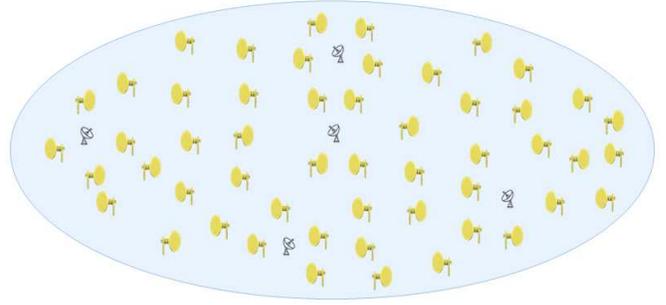}		
	\caption{A CogSat communication setting with the number of PUs growing faster than the number of SUs.}
	\label{fig:AsymptoticFS}
\end{figure}

In order to increase the readability of the analysis, we simplify the notation in this section. First of all, we drop the operator and beam indices. Then, as we study the interference constraints on PUs and the constraints are imposed for each subband frequency separately, we categorize the SUs based on their assigned subband. Following the FDMA scheme for SUs of each beam and due to frequency reuse in different beams, each subband is assigned to exactly one SU per beam. Consequently, we have $\bar{K}=NB$ SUs transmitting in each subband considering all the $N$ secondary operators. In the following, we present w.l.o.g. the analysis for one subband, although it applies to every subband.

For $\bar{k}\in\{1,...,\bar{K}\}$, we represent the channel gain and distance between SU $\bar{k}$ and PU $l$ by $F_{\bar{k},l}$ and $r_{\bar{k},l}$, respectively. Moreover, we denote the distance of the closest PU to SU $\bar{k}$ and the corresponding channel gain by $d_{\rm{SU},\bar{k}}$ and $F_{\bar{k}}$, respectively, and the distance of the closest SU to PU $l$ is denoted by $d_{\rm{PU},l}$. Also, we use the simplified notation $I_{th}$ for the interference-temperature threshold.

\begin{remark}
	\label{remarkK}
	In most scenarios, the number of operators $N$ is a small number and the number of beams is reported to be less than the number of subbands, i.e., $B\leqslant M$, \cite{7336495,louchart2019resource}, therefore we have $\bar{K}^2 \leqslant K$.
\end{remark}

\begin{remark}
	All the following lemmas are proved for a circular shaped area with a unity radius. However, they can be applied to any arbitrary shape by only considering the smallest circle that contains that area, i.e., peripheral circle, which multiplies the dimension with a constant number (i.e., not dependent on $L$ or $K$). Also, unity size is only assumed to simplify the illustration and, since normalization affects both numerator and denominator in ratios, it does not reduce the generality of the derivations.
\end{remark}

\begin{lemma}
	\label{lemma1}
	For SU $\bar{k}$, we have the following upper bound on the allocated power:
	\begin{equation}
	P_{\bar{k}} \leqslant \frac{I_{th}}{F_{\bar{k}}}.
	\label{equ:looseupperbound1}
	\end{equation}
\end{lemma}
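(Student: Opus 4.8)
The plan is to read the claimed bound straight off the interference constraint C1 by isolating a single PU and a single SU. First I would specialise C1 to the single subband under consideration, so that (after dropping the operator, beam, and subband indices and absorbing the assignment indicators, which equal one for the SU actually transmitting on that subband) it reads $\sum_{\bar{k}=1}^{\bar{K}} F_{\bar{k},l} P_{\bar{k}} \leqslant I_{th}$ for every PU $l$. The key observation is that, since the deterministic channel gain $F_{\bar{k},l}=C/r_{\bar{k},l}^{\alpha}$ is strictly decreasing in the distance $r_{\bar{k},l}$, the closest PU to SU $\bar{k}$ (the one at distance $d_{\rm{SU},\bar{k}}$) is precisely the PU at which SU $\bar{k}$ produces its largest channel gain, namely $F_{\bar{k}}$.

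I would then write down C1 for that closest PU, call it $l^\star$. Because all channel gains are positive and all powers are nonnegative, every summand on the left-hand side is nonnegative; discarding every term except the one indexed by $\bar{k}$ can only decrease the sum, giving $F_{\bar{k}} P_{\bar{k}} = F_{\bar{k},l^\star} P_{\bar{k}} \leqslant \sum_{\bar{k}'=1}^{\bar{K}} F_{\bar{k}',l^\star} P_{\bar{k}'} \leqslant I_{th}$. Dividing through by $F_{\bar{k}}>0$ yields $P_{\bar{k}} \leqslant I_{th}/F_{\bar{k}}$, which is exactly \eqref{equ:looseupperbound1}.

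There is no real obstacle here: the argument is a one-line relaxation of C1 obtained by throwing away nonnegative terms. The only point worth flagging is the choice of PU. Applying the same relaxation at any other PU $l$ would give the weaker inequality $P_{\bar{k}} \leqslant I_{th}/F_{\bar{k},l}$ with $F_{\bar{k},l}<F_{\bar{k}}$, hence a larger (looser) cap; selecting the closest PU gives the tightest power bound that can be extracted from a single interference constraint, which is why $F_{\bar{k}}$ appears in the statement. The bound remains \emph{loose} in the sense anticipated by its label, because it ignores the interference contributed to that PU by the other $\bar{K}-1$ co-subband SUs.
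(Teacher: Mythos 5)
Your proof is correct and follows essentially the same route as the paper: apply constraint C1 at the PU closest to SU $\bar{k}$, drop the remaining nonnegative interference terms, and divide by $F_{\bar{k}}$. Your added remarks (why the closest PU yields $F_{\bar{k}}$ and gives the tightest single-constraint bound) are just useful elaboration of the same one-line argument.
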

\begin{proof}
	The proof is straightforward. Considering the interference constraint for the closest PU to SU $\bar{k}$ similar to C1 in \eqref{equ:mainopt}, the summation of interference from all SUs should not exceed $I_{th}$. Since all $P_i$s and channel gains are considered to be non-negative, the upper bound in \eqref{equ:looseupperbound1} is obtained.	
\end{proof}

\begin{lemma}
	\label{lemma3}
	For each arbitrary SU $\bar{k}$ and for a given constant $0<\epsilon <1$, we have the probabilistic upper bound for $d_{\rm{SU},\bar{k}}$ (i.e., distance to the closest PU) as
	\begin{equation}
	\lim_{L\to \infty} p(d_{\rm{SU},\bar{k}}>\frac{1}{\sqrt{L^{1-\epsilon}}})= 0.
	\label{equ:r_k}
	\end{equation}
\end{lemma}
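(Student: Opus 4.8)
The plan is to exploit the independence and uniformity of the PU locations and to show that the probability that all $L$ PUs avoid a small disk centred at SU $\bar{k}$ decays to zero as $L$ grows.

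First I would condition on (fix) the position of SU $\bar{k}$ and recall that the $L$ PUs are drawn i.i.d.\ uniformly over the region of interest, which by the standing assumption is the disk of unit radius and hence of area $\pi$. For a radius $\rho>0$, the event $\{d_{\rm{SU},\bar{k}}>\rho\}$ is precisely the event that no PU lands in the intersection of the disk of radius $\rho$ centred at SU $\bar{k}$ with the region. Writing $S(\rho)$ for the area of this intersection, a single PU avoids it with probability $1-S(\rho)/\pi$, so by independence
\begin{equation}
p\big(d_{\rm{SU},\bar{k}}>\rho\big)=\Big(1-\frac{S(\rho)}{\pi}\Big)^{L}.
\end{equation}
Second, I would lower-bound $S(\rho)$. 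If SU $\bar{k}$ is in the interior the full disk contributes, $S(\rho)=\pi\rho^{2}$; the only difficulty is when SU $\bar{k}$ is near the boundary. Since the region is convex, at least (asymptotically) half of the small disk stays inside, so there is a constant $c>0$ with $S(\rho)\geq c\,\rho^{2}$ for all sufficiently small $\rho$, uniformly in the SU position. This yields
\begin{equation}
p\big(d_{\rm{SU},\bar{k}}>\rho\big)\leq\big(1-c\,\rho^{2}\big)^{L}.
\end{equation}

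Finally, I would substitute $\rho=L^{-(1-\epsilon)/2}=1/\sqrt{L^{1-\epsilon}}$, so that $\rho^{2}=L^{\epsilon-1}$ and $c\,\rho^{2}L=c\,L^{\epsilon}$, and apply the elementary inequality $1-x\leq e^{-x}$ to obtain
\begin{equation}
p\Big(d_{\rm{SU},\bar{k}}>\tfrac{1}{\sqrt{L^{1-\epsilon}}}\Big)\leq\exp\big(-c\,L^{\epsilon}\big)\xrightarrow[L\to\infty]{}0,
\end{equation}
using $\epsilon>0$. As this bound is independent of the SU location, the same limit holds after averaging over the (random) SU position.

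The main obstacle is the boundary/edge effect in the second step: the clean estimate $S(\rho)=\pi\rho^{2}$ holds only for interior points, and a SU close to the border sees only a fraction of the $\rho$-disk inside the region. Controlling this through the convexity of the region---which guarantees that a positive constant fraction $c$ of the small disk remains inside---is the delicate point, but it merely changes a constant and leaves the exponential decay $\exp(-c\,L^{\epsilon})$ intact. Everything else is a direct consequence of independence and the $1-x\le e^{-x}$ bound.
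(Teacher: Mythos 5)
Your proposal is correct and follows essentially the same route as the paper: both express $p(d_{\rm{SU},\bar{k}}>\rho)$ as the probability that all $L$ i.i.d.\ uniform PUs avoid the $\rho$-disk around the SU, handle the boundary case by lower-bounding the intersection area by a constant fraction of $\pi\rho^2$ (the paper uses the explicit factor $\tfrac14$ for an SU on the border of the unit disk), and then show the resulting $(1-c/L^{1-\epsilon})^L$ bound vanishes. The only difference is cosmetic: you conclude via the elementary inequality $1-x\leqslant e^{-x}$, giving the bound $\exp(-cL^{\epsilon})$, whereas the paper takes logarithms and applies L'H\^opital's rule to reach the same conclusion.
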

\begin{proof}
	The proof is derived in Appendix \ref{proof lemma 3}.	
\end{proof}

\begin{lemma}
	\label{lemma4}
	For each arbitrary PU $l$ and for a given constant $0<\epsilon <1$, we have the probabilistic lower bound for $d_{\rm{PU},l}$ as
	\begin{equation}
	\lim_{\bar{K}\to \infty} p(d_{\rm{PU},l}<\frac{1}{\sqrt{\bar{K}^{1+\epsilon}}})= 0.
	\label{equ:d_l}
	\end{equation}
\end{lemma}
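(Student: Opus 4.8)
The plan is to exploit the sparsity of the SUs, mirroring the complementary argument behind Lemma~\ref{lemma3} but with the roles of PUs and SUs interchanged and the direction of the bound reversed. I model the $\bar{K}$ SUs active on the subband as i.i.d. points drawn uniformly over the region of interest, which by the second remark I take w.l.o.g. to be the unit-radius disk of area $\pi$. I condition on the (arbitrary) location of PU $l$; the resulting bound will hold uniformly in that location, so no averaging over the PU position is required.

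First I would bound the single-SU hitting probability. For a radius $\rho$, a given uniformly placed SU lies within distance $\rho$ of PU $l$ exactly when it falls in the intersection of the ball $B(\text{PU}_l,\rho)$ with the region. Since that intersection has area at most $\pi\rho^2$, the probability that a given SU is within $\rho$ of PU $l$ is at most $\pi\rho^2/\pi=\rho^2$.

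Next, the event $\{d_{\rm{PU},l}<\rho\}$ is precisely the event that \emph{at least one} of the $\bar{K}$ SUs lands within distance $\rho$. I would control it either by a union bound, giving $p(d_{\rm{PU},l}<\rho)\le\bar{K}\rho^2$, or, using independence, through the complementary event, giving the sharper $p(d_{\rm{PU},l}<\rho)=1-(1-\rho^2)^{\bar{K}}$ whenever the ball is contained in the region. Substituting the prescribed radius $\rho=1/\sqrt{\bar{K}^{1+\epsilon}}$, so that $\rho^2=\bar{K}^{-(1+\epsilon)}$, the union bound yields $p(d_{\rm{PU},l}<\rho)\le\bar{K}\cdot\bar{K}^{-(1+\epsilon)}=\bar{K}^{-\epsilon}$, which tends to $0$ as $\bar{K}\to\infty$ since $\epsilon>0$. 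Equivalently, $(1-\bar{K}^{-(1+\epsilon)})^{\bar{K}}\to 1$ because the exponent times the term, $\bar{K}\cdot\bar{K}^{-(1+\epsilon)}=\bar{K}^{-\epsilon}$, vanishes, so the complementary expression also goes to $0$.

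The only delicate point is the boundary effect: when PU $l$ sits near the edge of the region, $B(\text{PU}_l,\rho)$ is not fully contained in it, so the exact product form $(1-\rho^2)^{\bar{K}}$, which presumes a hitting probability of exactly $\rho^2$, is not valid there. I would therefore favor the union-bound route, whose inequality $p(\text{hit})\le\rho^2$ survives the clipping of the ball by the boundary and thus holds uniformly over every PU position; the only requirement is $\rho<1$ eventually, which is automatic since $\rho\to 0$. This keeps the argument clean and independent of where PU $l$ happens to lie.
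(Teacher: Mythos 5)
Your proof is correct, but your preferred route is genuinely different from (and more elementary than) the paper's. The paper mirrors its Lemma~\ref{lemma3} argument: using independence of the SU positions, it writes the complementary probability as $p(d_{\mathrm{PU},l}>\rho)=\bigl(1-\mathcal{A}(B(\mathrm{PU}_l,\rho)\cap B(0,1))/\pi\bigr)^{\bar{K}}$ (note that keeping the intersection area inside the exact expression is how the paper handles the boundary effect you worry about), lower-bounds this by placing PU $l$ at the center, where the single-SU hitting probability is maximal and equal to $\rho^2$, and then shows via L'H\^{o}pital's rule that $(1-\bar{K}^{-(1+\epsilon)})^{\bar{K}}\to 1$. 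Your union bound instead attacks the event directly, $p(d_{\mathrm{PU},l}<\rho)\le \bar{K}\rho^2=\bar{K}^{-\epsilon}\to 0$, which buys several things: it requires no independence among SU positions (only the uniform marginals), no calculus, it is automatically uniform over the PU location so boundary clipping never enters, and it yields an explicit convergence rate $O(\bar{K}^{-\epsilon})$ that the paper's qualitative limit does not state. What the paper's product form buys is an (essentially exact) expression for the probability and structural symmetry with Lemma~\ref{lemma3}, where the complementary form is unavoidable because there the event of interest is ``no PU inside the ball'' and a union bound would point the wrong way; for the present lemma nothing is lost by your shortcut. Your secondary route, $(1-\rho^2)^{\bar{K}}\to 1$ because $\bar{K}\rho^2\to 0$, is essentially the paper's argument with a cleaner limit computation, and your care about when the equality $p(d_{\mathrm{PU},l}<\rho)=1-(1-\rho^2)^{\bar{K}}$ is legitimate (ball contained in the region) is a valid point of rigor, correctly resolved by your fallback to the inequality.
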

\begin{proof}
	The proof is derived in Appendix \ref{proof lemma 4}.	
\end{proof}

\begin{theorem}
	\label{theorem1}
	The set of interference-temperature constraints in the optimization problem \eqref{equ:mainopt}, i.e., C1, converts to the peak power constraints given that $L\to\infty$:
	\begin{equation}
	P_{\bar{k}} \leqslant \frac{I_{th}}{F_{\bar{k}}}.
	\end{equation}	
\end{theorem}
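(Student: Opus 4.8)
The plan is to read the theorem as an asymptotic equivalence of two feasible sets and to split the argument into the two inclusions. Lemma~\ref{lemma1} already delivers one direction for free: any power allocation satisfying the interference constraint C1 (in particular the constraint at the PU closest to SU $\bar{k}$) necessarily obeys $P_{\bar{k}} \leqslant I_{th}/F_{\bar{k}}$, so the feasible set of C1 is contained in the feasible set cut out by the peak-power bounds. What remains, and what I regard as the substance of the theorem, is the reverse inclusion: I would show that, asymptotically and with high probability, any allocation respecting $P_{\bar{k}} \leqslant I_{th}/F_{\bar{k}}$ for every $\bar{k}$ automatically satisfies all of the constraints C1.

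To do so I would fix an arbitrary PU $l$ and bound its received interference $I^{(l)}=\sum_{\bar{k}} F_{\bar{k},l}P_{\bar{k}}$. Writing $F_{\bar{k},l} = C/r_{\bar{k},l}^{\alpha}$ and $F_{\bar{k}} = C/d_{\rm{SU},\bar{k}}^{\alpha}$ and inserting the peak-power bound $P_{\bar{k}} \leqslant I_{th}\, d_{\rm{SU},\bar{k}}^{\alpha} / C$ gives, term by term,
\begin{equation}
F_{\bar{k},l} P_{\bar{k}} \leqslant I_{th}\left(\frac{d_{\rm{SU},\bar{k}}}{r_{\bar{k},l}}\right)^{\alpha} .
\end{equation}
The geometry then enters through the two probabilistic lemmas: since $d_{\rm{PU},l}$ is the distance from PU $l$ to its nearest SU, every SU satisfies $r_{\bar{k},l} \geqslant d_{\rm{PU},l}$, which Lemma~\ref{lemma4} lower-bounds by $\bar{K}^{-(1+\epsilon)/2}$ w.h.p., while Lemma~\ref{lemma3} upper-bounds $d_{\rm{SU},\bar{k}}$ by $L^{-(1-\epsilon)/2}$ w.h.p. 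Substituting and summing the $\bar{K}$ contributions yields
\begin{equation}
I^{(l)} \;\leqslant\; I_{th}\,\bar{K}\,\frac{\bar{K}^{\alpha(1+\epsilon)/2}}{L^{\alpha(1-\epsilon)/2}} .
\end{equation}

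To close the argument I would invoke Remark~\ref{remarkK}, i.e. $\bar{K}^2 \leqslant K = L^{\beta}$, so that $\bar{K} \leqslant L^{\beta/2}$ and the right-hand side is at most $I_{th}\,L^{e(\epsilon)}$ with exponent $e(\epsilon)=\tfrac{\beta}{2}\bigl(1+\tfrac{\alpha(1+\epsilon)}{2}\bigr)-\tfrac{\alpha(1-\epsilon)}{2}$. Letting $\epsilon\to 0$, this exponent is negative precisely when $\beta<\tfrac{2\alpha}{2+\alpha}$, which for any path-loss exponent $\alpha\geqslant 2$ holds for every admissible $\beta<1$; hence one can pick $\epsilon$ small enough that $I^{(l)}\to 0\leqslant I_{th}$, so the constraint at PU $l$ is satisfied. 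As this holds for an arbitrary $l$, the peak-power bounds imply C1 asymptotically, and combined with Lemma~\ref{lemma1} the two constraint sets coincide in the limit.

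I expect the genuinely delicate step to be the passage from \emph{``for a fixed SU/PU''} to \emph{``simultaneously for all SUs and all PUs.''} Lemmas~\ref{lemma3} and~\ref{lemma4} are stated in probability for a single node, whereas the bound above silently applies Lemma~\ref{lemma3} to all $\bar{K}$ SUs appearing in the sum and, to obtain C1 in full, Lemma~\ref{lemma4} to all $L$ PUs. Making this rigorous requires a union bound, which is harmless only if the convergence rates in the appendix proofs of the two lemmas decay faster than $1/\bar{K}$ and $1/L$ respectively; I would therefore extract explicit tail rates there (nearest-neighbour distance tails under a uniform placement model are typically exponential or polynomial in $L$, which comfortably survives the union bound). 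The per-term computation itself is routine once the two distance scales $L^{-(1-\epsilon)/2}$ and $\bar{K}^{-(1+\epsilon)/2}$ are in hand.
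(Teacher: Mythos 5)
Your first inclusion (C1 $\Rightarrow$ peak-power bounds) is exactly Lemma~\ref{lemma1} and matches the paper. The genuine gap is in your reverse inclusion, and it is not the union-bound/rate issue you flag at the end --- it is that the joint event your bound requires is \emph{impossible}. You apply Lemma~\ref{lemma3} to every SU and Lemma~\ref{lemma4} to every PU simultaneously, but these two families of events are mutually exclusive: if every SU has a PU within $L^{-(1-\epsilon)/2}$, then each such nearest PU $l^\ast$ has an SU within $L^{-(1-\epsilon)/2}$, and since $\bar{K}^{1+\epsilon}\leqslant L^{\beta(1+\epsilon)/2}\ll L^{1-\epsilon}$ in the small-$\epsilon$ regime you need anyway to make your exponent $e(\epsilon)$ negative, we have $L^{-(1-\epsilon)/2}<\bar{K}^{-(1+\epsilon)/2}$, so PU $l^\ast$ violates the Lemma~\ref{lemma4} bound. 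No sharper tail estimate can repair this: the probability that a fixed PU has some SU within $\bar{K}^{-(1+\epsilon)/2}$ is of order $\bar{K}\cdot\bar{K}^{-(1+\epsilon)}=\bar{K}^{-\epsilon}$, and $L\bar{K}^{-\epsilon}\geqslant L^{1-\epsilon\beta/2}\to\infty$; indeed, on the order of $\bar{K}$ PUs (the nearest neighbours of the SUs) genuinely violate the bound, so a union bound over the $L$ constraints is hopeless in principle, not merely delicate.

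Worse, the PUs excluded this way are precisely the binding ones, and at them your conclusion $I^{(l)}\to 0$ is false: if PU $l$ is the nearest PU of SU $\bar{k}$, the direct term is $F_{\bar{k},l}P_{\bar{k}}=F_{\bar{k}}P_{\bar{k}}$, which equals $I_{th}$ when $P_{\bar{k}}$ sits at the peak bound, so $I^{(l)}=I_{th}+(\text{positive cross terms})>I_{th}$. Hence the exact reverse inclusion you aim for does not hold even asymptotically; peak-power vectors are always slightly C1-infeasible, which is exactly why the paper retains the back-off factor $(1-\frac{1}{\lambda})$ in \eqref{equ:constraintswithlambda}. The paper's proof sidesteps the whole issue by localizing: it fixes SU $1$, considers only the constraint of its nearest PU $1$, splits off the direct term $P_1F_1$, and applies the Lemma~\ref{lemma4} computation to PU $1$ with respect to the \emph{other} SUs $i\geqslant 2$ only (legitimate, because the location of PU $1$ is independent of the locations of SUs $2,\dots,\bar{K}$); then only the cross terms $\sum_{i\geqslant 2}P_iF_{1,i}$ must vanish, and the C1 restriction on $P_1$ converges to $P_1\leqslant I_{th}/F_1$ without ever claiming that peak powers are exactly C1-feasible. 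Your per-PU computation is the right ingredient, but it must be applied to the cross terms only, after separating out the (w.h.p.\ at most one) SU whose nearest PU is $l$; carried out that way it proves feasibility of the backed-off bounds $P_{\bar{k}}\leqslant\frac{I_{th}}{F_{\bar{k}}}(1-\frac{1}{\lambda})$, not of the peak bounds themselves.
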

\begin{proof}
	The proof is derived in Appendix \ref{proof theorem 1}.			
\end{proof}

Combining the result of Theorem \ref{theorem1} with the original peak power constraint, i.e., C3, we achieve the following peak power constraints:
\begin{equation}
0 \leqslant P_{\bar{k}} \leqslant \min(\frac{I_{th}}{F_{\bar{k}}},P_{\max}), \quad \forall {k}.
\end{equation}

\begin{remark}
	In practice and in scenarios with $\lambda>> 1$ but $\lambda \not\to \infty$, we need to keep the factor $(1-\frac{1}{\lambda})$ for the power constraints (see \eqref{equ:constraintswithlambda} in Appendix \ref{proof theorem 1} for more details).
\end{remark}

\section{Numerical Experiments}
In this section, we evaluate the accuracy of the mathematical analysis for finite scenarios in a CogSat communication network and also demonstrate the performance improvement due to the enabled decentralized optimization. 
The PUs and SUs are randomly located with a uniform distribution in a region with a square shape.

In the first scenario, we fix  $\lambda=10$ and evaluate how the probabilities derived in lemmas \ref{lemma3} and \ref{lemma4} and Theorem \ref{theorem1} converge versus ${L}$. The parameter $\epsilon$ is set to $0.5$. Fig. \ref{Pl} depicts the probability of far PUs to a SU, $P_f = p(d_{\rm{SU},\bar{k}}> \frac{1}{\sqrt{L^{1-\epsilon}}})$. As shown in Fig. \ref{Pl}, even without growing to infinity, the simulation result approaches the theoretical asymptotic derivation.
In Fig. \ref{Psm}, the probability of close SUs to a PU, $P_{c} = p(d_{\rm{PU},l}<\frac{1}{\sqrt{\bar{K}^{1+\epsilon}}})$ is plotted. Although the probability is not close to zero in this plot, as illustrated in Fig. \ref{Psm}, it is converging to zero by increasing the number of PUs.
Finally, in Fig. \ref{Ps}, following the new peak power constraint, we observe the average probability of satisfying interference-temperature constraints of PUs, $P_s$. In this set of simulations, we do not consider the original peak power constraint ($P_{max}$) to make the effect of the new constraint more evident. It is worth noting that adding the peak power constraint causes the SUs powers to remain the same or decrease, and consequently $P_s$ will remain the same or increase. As observed in Fig. \ref{Ps}, by increasing the number of users the result is closer to the derivations obtained for the asymptotic condition.

\begin{figure*}
	\centering	
	\subfigure[] {\includegraphics[width=.32\textwidth]{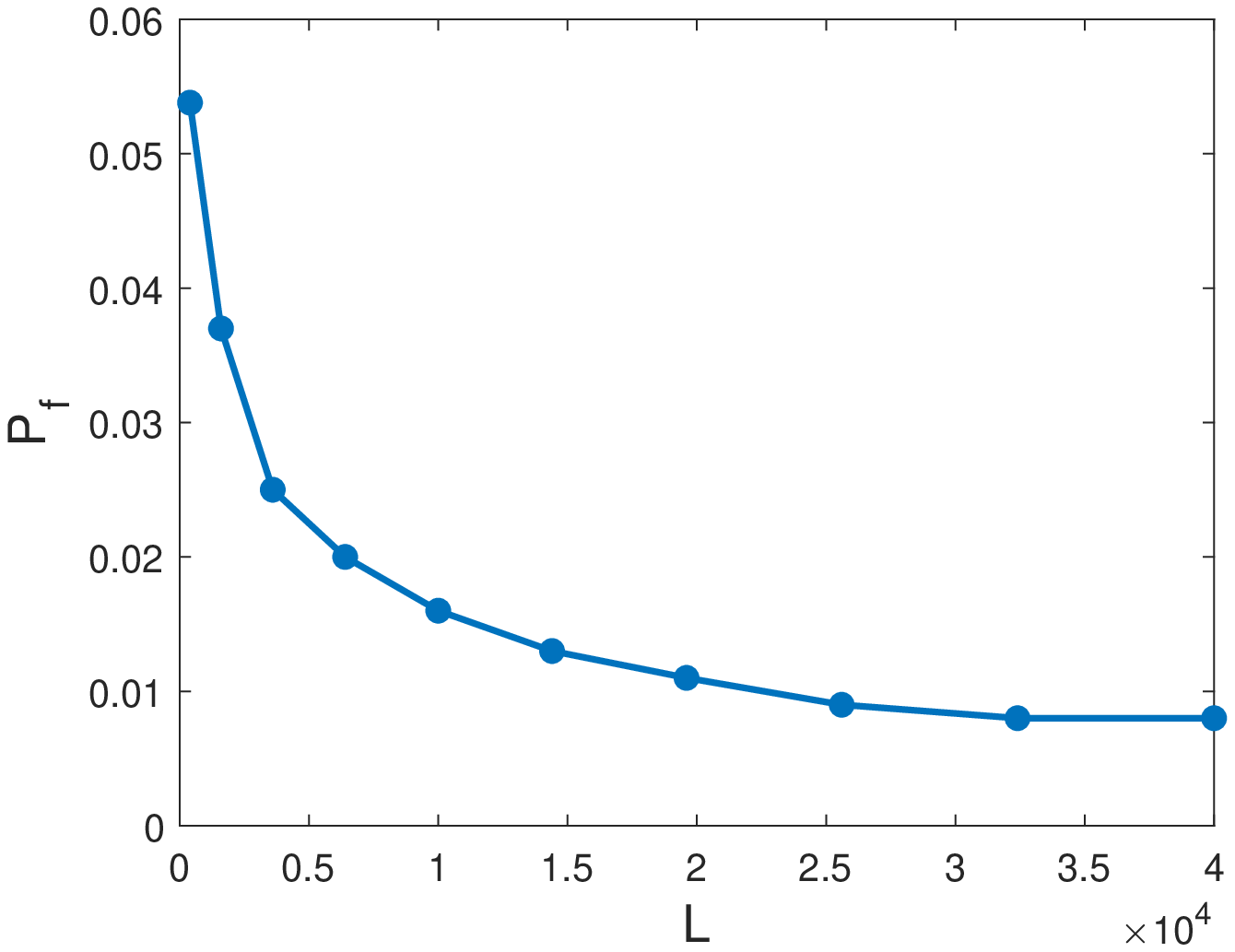}
		\label{Pl}}
	\subfigure[] {\includegraphics[width=.32\textwidth]{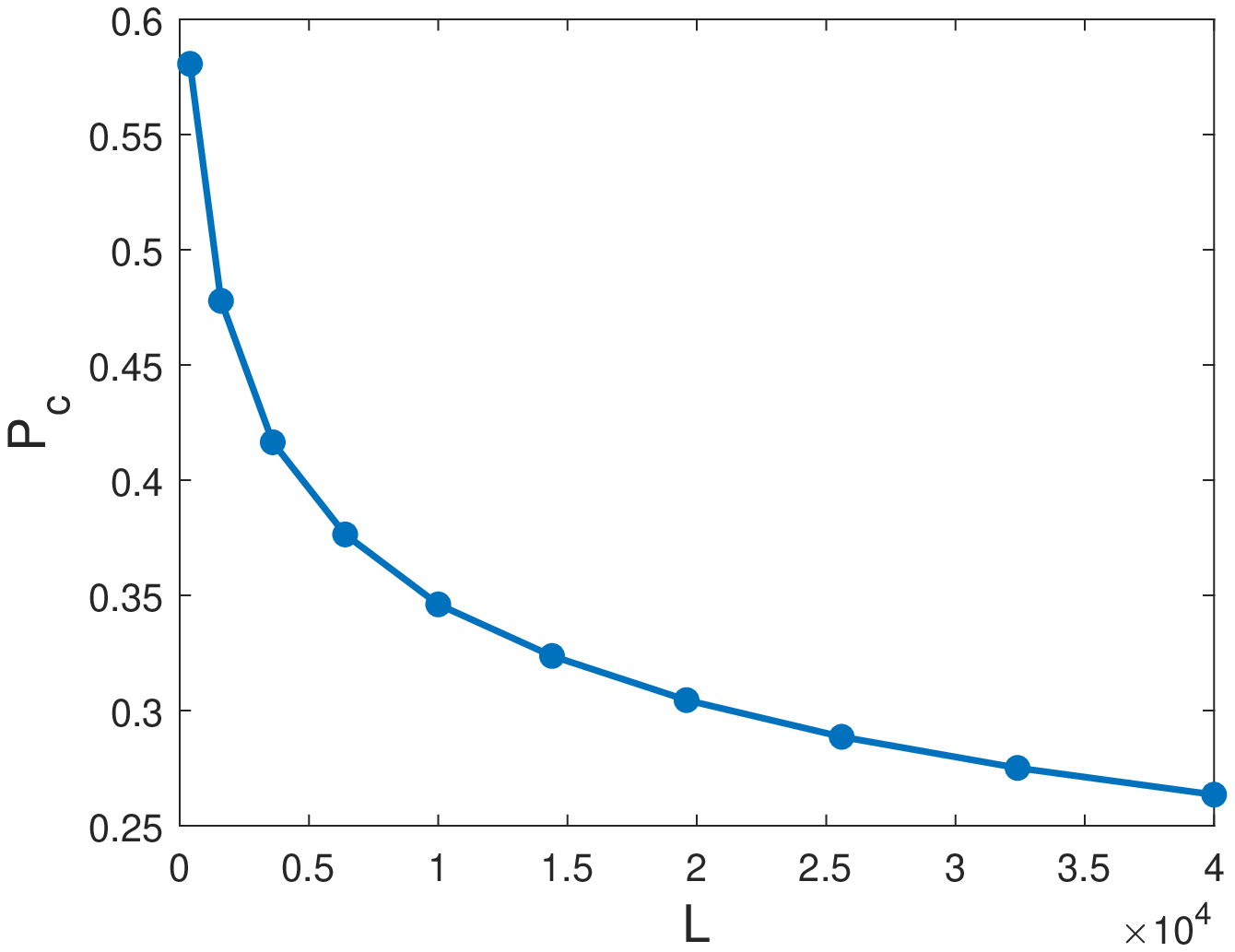}
		\label{Psm}}
	\subfigure[] {\includegraphics[width=.32\textwidth]{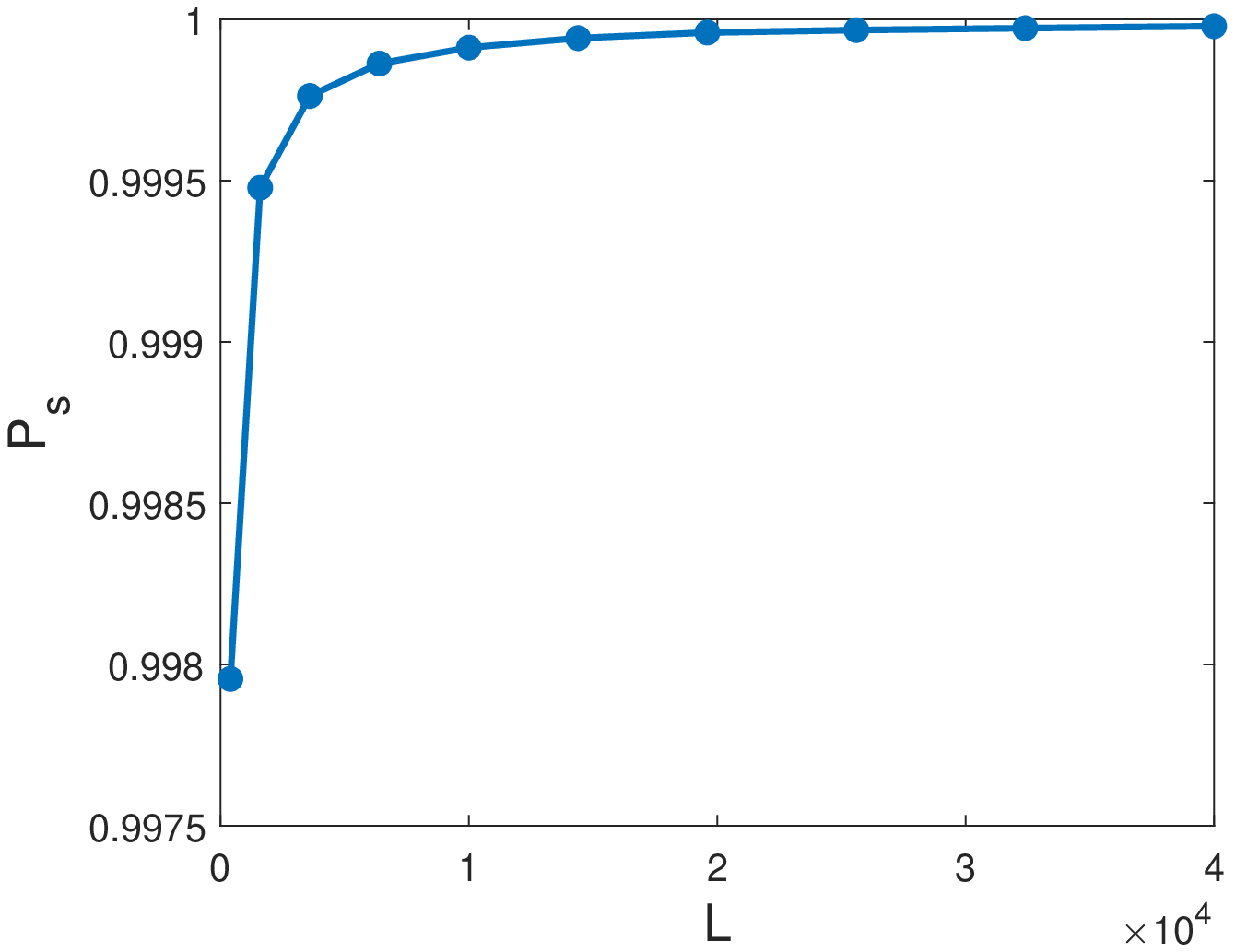}
		\label{Ps}} \\
	\caption{Evaluation of the asymptotic limit scenario versus number of SUs for $\lambda=10$, (a) probability of far PUs to a SU (Lemma \ref{lemma3}), (b) probability of close SUs to a PU (Lemma \ref{lemma4}), (c) average probability of satisfying interference-temperature constraints of PUs (Theorem \ref{theorem1}). }
	\label{simulfigs}
\end{figure*}

Next, we simulate a CogSat communication system to observe the achieved sum-rate provided by the decentralizing scheme (the optimal method). Since the decentralized scheme does not need any inter-operator information exchange, for a fair performance comparison, we consider the equal split algorithm in which the interference-temperature thresholds are equally split among the operators and each operator performs a local optimization algorithm. In this scenario, we consider $5$ satellites. To solve the optimization problems in this scenario, we employ the convex relaxation techniques in \cite{louchart2019resource} followed by a rounding algorithm \cite{8537943} to obtain a feasible solution for the original problem.
 Fig. \ref{fig:Distributed-Equal} presents the sum-rate of both methods versus the number of PUs $L$. As shown in Fig. \ref{fig:Distributed-Equal}, since the number of constraints is proportional to $L$, the sum-rates of both methods decrease by increasing $L$. The decentralized method outperforms the equal split method in terms of sum-rate. This improvement increases by increasing $L$, because the equal split method becomes more limiting as the number of constraints increases.
\begin{figure}
	\centering
	\includegraphics[width=.48\textwidth]{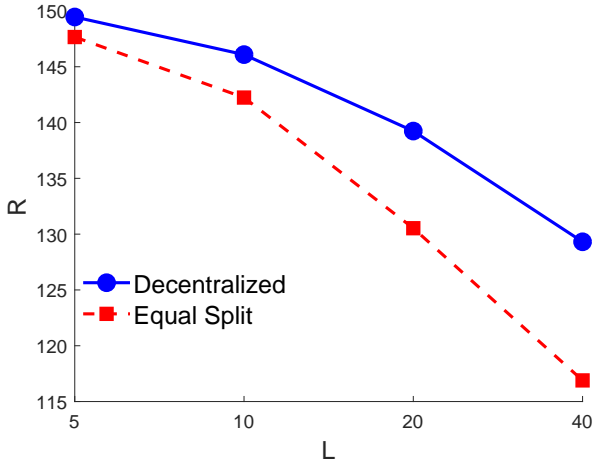}			
	\caption{Sum-rate comparison between the decentralized and equal split methods in a CogSat communication system.}
	\label{fig:Distributed-Equal}
\end{figure}

\section{Conclusion}
In this paper, we have considered the asymptotic limit of CogSat communication systems where the number of PUs grows faster than the number of SUs. 
We demonstrated that the coupling interference-temperature constraints become simple peak power constraints in the asymptotic regime. The result is surprising because an increase in the number of PUs leads to more interference-temperature constraints and subsequently the users are more coupled. This observation is especially important in multi-operator settings where solving the problem requires either information exchange among the operators or suboptimally splitting the shared resource among operators. 

\appendices

\section{Proof of lemma \ref{lemma3}}
\label{proof lemma 3}
\begin{proof}
	
	Since $d_{\rm{SU},\bar{k}}$ is the distance of the closest PU to the SU $\bar{k}$, the probability of $d_{\rm{SU},\bar{k}}>\frac{1}{\sqrt{L^{1-\epsilon}}}$ is equal to the probability of no PU being inside a circle with a radius equal to $\frac{1}{\sqrt{L^{1-\epsilon}}}$ centered at SU $\bar{k}$. Therefore,
	\begin{equation}
	\begin{aligned}	
	p(d_{\rm{SU},\bar{k}}>&\frac{1}{\sqrt{L^{1-\epsilon}}}) = p(r_{\bar{k},l}>\frac{1}{\sqrt{L^{1-\epsilon}}})^L \\ &= (1-\frac{\mathcal{A}(B(\bc,\frac{1}{\sqrt{L^{1-\epsilon}}})\cap B(\b0,1))}{\mathcal{A}(B(\b0,1))})^L,
	\end{aligned}
	\label{equ:14}
	\end{equation}
	where $B(\ba,b)$ is a circle centered at $\ba$ and with radius $b$, $\b0$ is the center of area of interest, $\bc$ is the location of SU ${\bar{k}}$, $\mathcal{A}(.)$ is the operator that calculates the surface of the input area, and $\cap$ represents the intersection of two areas.
	
	Clearly, the probability is higher when the distance of the SU ${\bar{k}}$ from the center of area of interest increases. Since we want to calculate an upper bound for this probability, we assume that SU ${\bar{k}}$ is on the border of area of interest $B(\b0,1)$. Also, by geometry, it is easy to see that the intersection of the circle of radius $\frac{1}{\sqrt{L^{1-\epsilon}}}$ centered on the border of a circle with a unit radius satisfies:
	\begin{equation}
	\mathcal{A}(B(\bc,\frac{1}{\sqrt{L^{1-\epsilon}}})\cap B(\b0,1)) \geqslant \frac{\mathcal{A}(B(\bc,\frac{1}{\sqrt{L^{1-\epsilon}}}))}{4},
	\label{equ:15}
	\end{equation}
	which is a lower bound for the area of intersection and thus an upper bound for the probability. Using \eqref{equ:15} in \eqref{equ:14}, we have
	\begin{equation}
	p(d_{\rm{SU},\bar{k}}>\frac{1}{\sqrt{L^{1-\epsilon}}}) \leqslant (1-\frac{1}{4{L^{1-\epsilon}}})^L = {\rm{e}}^{\ln {(1-\frac{1}{4{L^{1-\epsilon}}})^L}}.
	\end{equation}
	Let us investigate the limit of the exponent:
	\begin{equation}
	\begin{aligned}	
	\lim_{L\to \infty} \ln {(1-\frac{1}{4{L^{1-\epsilon}}})^L} &= \lim_{L\to \infty} L\ln {(1-\frac{1}{4{L^{1-\epsilon}}})} \\ &= \lim_{L\to \infty} \frac{\ln {(1-\frac{1}{4{L^{1-\epsilon}}})}}{\frac{1}{L}},
	\end{aligned}
	\end{equation}
	which gets the form $\frac{0}{0}$ as $L\to \infty$. Applying L'H\^{o}pital's rule:
	\begin{equation}
	\lim_{L\to \infty} \ln {(1-\frac{1}{4{L^{1-\epsilon}}})^L} = \lim_{L\to \infty} -\frac{(1-\epsilon) L^{\epsilon}}{(1-\frac{1}{4L^{1-\epsilon}})} \to -\infty.
	\end{equation}
	Therefore
	\begin{equation}
	\lim_{L\to \infty} p(d_{\rm{SU},\bar{k}}>\frac{1}{\sqrt{L^{1-\epsilon}}}) \leqslant \lim_{L\to \infty} {\rm{e}}^{\ln {(1-\frac{1}{4{L^{1-\epsilon}}})^L}} = 0,
	\end{equation}
	and consequently $\lim_{L\to \infty} p(d_{\rm{SU},\bar{k}}>\frac{1}{\sqrt{L^{1-\epsilon}}}) = 0$.
\end{proof}

\section{Proof of lemma \ref{lemma4}}
\label{proof lemma 4}
\begin{proof}	
	Similar to the proof of lemma \ref{lemma3}, we have
	\begin{equation}
	\begin{aligned}
	p(d_{\rm{PU},l}>&\frac{1}{\sqrt{\bar{K}^{1+\epsilon}}}) = (1- p(r_{\bar{k},l}<\frac{1}{\sqrt{\bar{K}^{1+\epsilon}}}))^{\bar{K}} \\ &= 
	(1-\frac{\mathcal{A}(B(\bc,\frac{1}{\sqrt{\bar{K}^{1+\epsilon}}})\cap B(\b0,1))}{\mathcal{A}(B(\b0,1))})^{\bar{K}},
	\end{aligned}
	\end{equation}
	where definitions of $\mathcal{A}(.)$ and $B(.,.)$ are the same as in Appendix \ref{proof lemma 3}, and $\bc$ is the location of PU $l$.
	Here, we are interested in a lower bound of this probability. Therefore, we need to find an upper bound for the area of intersection between the two circles. It is straightforward to see that assuming PU $l$ at the center leads to a lower bound. Hence,
	\begin{equation}
	p(d_{\rm{PU},l}>\frac{1}{\sqrt{\bar{K}^{1+\epsilon}}}) \geqslant (1-\frac{1}{{\bar{K}^{1+\epsilon}}})^{\bar{K}} = {\rm{e}}^{\ln {(1-\frac{1}{{\bar{K}^{1+\epsilon}}})^{\bar{K}}}}.	
	\end{equation}
	The limit of the exponent is
	
	\begin{equation}
	\begin{aligned}	
	\lim_{{\bar{K}}\to \infty} \ln {(1-\frac{1}{{{\bar{K}}^{1+\epsilon}}})^{\bar{K}}} &= \lim_{{\bar{K}}\to \infty} {\bar{K}}\ln {(1-\frac{1}{{{\bar{K}}^{1+\epsilon}}})} \\ &= \lim_{{\bar{K}}\to \infty} \frac{\ln {(1-\frac{1}{{{\bar{K}}^{1+\epsilon}}})}}{\frac{1}{{\bar{K}}}},
	\end{aligned}
	\end{equation}
	which gets the form $\frac{0}{0}$ as ${\bar{K}}\to \infty$. Applying L'H\^{o}pital's rule:
	\begin{equation}
	\lim_{{\bar{K}}\to \infty} \ln {(1-\frac{1}{{{\bar{K}}^{1+\epsilon}}})^{\bar{K}}} = \lim_{{\bar{K}}\to \infty} -\frac{(1+\epsilon) {\bar{K}}^{-\epsilon}}{1-\frac{1}{{{\bar{K}}^{1+\epsilon}}}} = 0.
	\end{equation}
	Therefore
	\begin{equation}
	\lim_{{\bar{K}}\to \infty} p(d_{\rm{PU},l}>\frac{1}{\sqrt{\bar{K}^{1+\epsilon}}}) \geqslant \lim_{{\bar{K}}\to \infty} {\rm{e}}^{\ln {(1-\frac{1}{{\bar{K}^{1+\epsilon}}})^{\bar{K}}}} = 1,
	\end{equation}	
	which implies $\lim_{{\bar{K}}\to \infty} p(d_{\rm{PU},l}>\frac{1}{\sqrt{\bar{K}^{1+\epsilon}}}) = 1$.
\end{proof}

\section{Proof of theorem \ref{theorem1}}
\label{proof theorem 1}
\begin{proof}
	Without loss of generality, let us consider SU $1$ and assume PU $1$ is the closest PU to SU $1$. Rewriting the interference-temperature constraint of C1 in \eqref{equ:mainopt} for PU $1$, we have
	\begin{equation}
	P_1 \leqslant \frac{I_{th}-\sum_{i=2}^{\bar{K}}P_iF_{1,i}}{F_{1}}.
	\label{equ:theo1proof0}
	\end{equation}
	In the following, we show that the probability that the term $\sum_{i=2}^{\bar{K}}P_iF_{1,i}$ is negligible approaches 1. 
	First, let us show that the joint probability of the events
	$E_1 = \{d_{\mathrm{SU},i}<\frac{1}{\sqrt{L^{1-\epsilon}}}\}$ and $E_2 = \{d_{\mathrm{PU},1}>\frac{1}{\sqrt{\bar{K}^{1+\epsilon}}}\}$ approaches to 1 as $L$ tends to infinity. Note that
	\begin{equation}
	\begin{aligned}	
	p(E_1\cap E_2) &= p(d_{\mathrm{SU},i}<\frac{1}{\sqrt{L^{1-\epsilon}}})p(d_{\mathrm{PU},1}>\frac{1}{\sqrt{\bar{K}^{1+\epsilon}}}).
	\end{aligned}
	\end{equation}	
	From the fact that $K = L^\beta$, with $0<\beta<1$, and that $\bar{K}^2\leq K$, we can write that
	\begin{equation}
	\begin{aligned}	
	p(E_1\cap E_2) &\leq \Big(1 - (1-\frac{1}{L^{1-\epsilon}})^L\Big)\Bigl(1-\frac{1}{L^{\beta\frac{1+\epsilon}{2}}}\Bigr)^{L^{\frac{\beta}{2}}}.\label{eq_theo_proof_3}
	\end{aligned}
	\end{equation}	
	From the derivations of Lemma~\ref{lemma3} and Lemma \ref{lemma4} in previous appendices, we can take the limit of~\eqref{eq_theo_proof_3} and show that
	\begin{equation}
	\begin{aligned}	
	\lim_{L\rightarrow\infty} p(E_1\cap E_2) &=1.\label{eq_theo_proof_3b}
	\end{aligned}
	\end{equation}		
	Let us show now that under the previous condition, the interference term can be made negligible. 
	From Lemma \ref{lemma1} and based on the relation of $F$ and distance, we have
	\begin{equation}
	\sum_{i=2}^{\bar{K}}P_iF_{1,i} \leqslant I_{th}\sum_{i=2}^{\bar{K}}\frac{F_{1,i}}{F_{i}} = I_{th}\sum_{i=2}^{\bar{K}}\frac{d_{\mathrm{SU},i}^{\alpha}}{r_{1,i}^{\alpha}}.
	\label{equ:theo1proof1}
	\end{equation}
	Consider now that $d_{\mathrm{SU},i}<\frac{1}{\sqrt{L^{1-\epsilon}}}$ and $r_{1,i}>\frac{1}{\sqrt{\bar{K}^{1+\epsilon}}}$ (note that $r_{1,i}$ is lower bounded by $d_{\mathrm{PU},1}$). 
	Subsequently, we simplify \eqref{equ:theo1proof1} as
	\begin{equation}
	\sum_{i=2}^{\bar{K}}P_iF_{1,i} \leqslant I_{th}\sum_{i=2}^{\bar{K}}\frac{{{\bar{K}}^{(1+\epsilon){\frac{\alpha}{2}}}}}{{L^{(1-\epsilon)\frac{\alpha}{2}}}}.
	\label{equ:theo1proof2}
	\end{equation}
	Assuming $\alpha=2$, i.e., free space propagation loss, \eqref{equ:theo1proof2} boils down to
	\begin{equation}
	\sum_{i=2}^{\bar{K}}P_iF_{1,i} \leqslant I_{th}\sum_{i=2}^{\bar{K}}\frac{{\bar{K}}^{1+\epsilon}}{L^{1-\epsilon}}=I_{th}(\bar{K}-1)\frac{{\bar{K}}^{1+\epsilon}}{L^{1-\epsilon}}.
	\label{equ:theo1proof3}
	\end{equation}	
	Substituting this result in \eqref{equ:theo1proof0} leads to
	\begin{equation}
	P_1 \leqslant \frac{I_{th}}{F_{1}}(1-\frac{\bar{K}^{2+\epsilon}}{L^{1-\epsilon}}) \approx \frac{I_{th}}{F_{1}}(1-\frac{1}{\lambda}).
	\label{equ:constraintswithlambda}
	\end{equation}	
	Since Lemma~\ref{lemma3} and Lemma \ref{lemma4} are correct for any $0< \epsilon < 1$, we can assume that $0< \epsilon < \frac{2(1-\beta)}{1+\frac{\beta}{2}}$. Thus, for $\lambda \to \infty$, we have
	\begin{equation}
	P_1 \leqslant \frac{I_{th}}{F_{1}}.
	\end{equation}	
\end{proof}

\bibliographystyle{IEEEtran}
\bibliography{ref.bib}

\end{document}